\theoremstyle{plain}
\numberwithin{equation}{section}
\newtheorem{thm}{Theorem}[section]
\newtheorem{lem}[thm]{Lemma}
\newenvironment{exam}[1]
{\begin{flushleft}\textbf{Example #1}.\enspace}%
{\end{flushleft}}
\newcommand{\complex}{{\mathbb C}}
\newcommand{\real}{{\mathbb R}}
\newcommand{\tbullet}{\mathrel{\raise .4ex\hbox{\tiny$\bullet$}}} 
\newcommand{\rmtr}{\mathrm{tr\,}}
\newcommand{\rmcor}{\mathrm{Cor}}
\newcommand{\rmre}{\mathrm{Re\,}}
\newcommand{\rmim}{\mathrm{Im\,}}
\newcommand{\ityes}{\textit{yes}}
\newcommand{\itno}{\textit{no}}
\newcommand{\hscript}{\mathcal{H}}
\newcommand{\iscript}{\mathcal{I}}
\newcommand{\lscript}{\mathcal{L}}
\newcommand{\oscript}{\mathcal{O}}
\newcommand{\rscript}{\mathcal{R}}
\newcommand{\sscript}{\mathcal{S}}
\newcommand{\tscript}{\mathcal{T}}
\newcommand{\ahat}{\widehat{A}}
\newcommand{\bhat}{\widehat{B}}
\newcommand{\atilde}{\widetilde{A}}
\newcommand{\btilde}{\widetilde{B}}
\newcommand{\iscripttilde}{\widetilde{\iscript}}
\newcommand{\iscriptbar}{\overline{\iscript}}
\newcommand{\rarrow}{\overrightarrow{r}}
\newcommand{\sigmaarrow}{\overrightarrow{\sigma}}
\newcommand{\ab}[1]{\left|#1\right|}
\newcommand{\doubleab}[1]{\left|\left|#1\right|\right|}
\newcommand{\brac}[1]{\left\{#1\right\}}
\newcommand{\paren}[1]{\left(#1\right)}
\newcommand{\sqbrac}[1]{\left[#1\right]}
\newcommand{\elbows}[1]{{\left\langle#1\right\rangle}}
\newcommand{\ket}[1]{{\left|#1\right>}}
\newcommand{\bra}[1]{{\left<#1\right|}}
\begin{document}

\title{REAL\,-VALUED OBSERVABLES\\ AND QUANTUM UNCERTAINTY}
\author{Stan Gudder\\ Department of Mathematics\\
University of Denver\\ Denver, Colorado 80208\\
sgudder@du.edu}
\date{}
\maketitle

\bigskip
\textit{Dedicated to the memory of Richard Greechie (1941--2022).}

\textit{The author's cherished friend, long time colleague and collaborator.}
\bigskip

\begin{abstract}
We first present a generalization of the Robertson-Heisenberg uncertainty principle. This generalization applies to mixed states and contains a covariance term. For faithful states, we characterize when the uncertainty inequality is an equality. We next present an uncertainty principle version for real-valued observables. Sharp versions and conjugates of real-valued observables are considered. The theory is illustrated with examples of dichotomic observables. We close with a discussion of real-valued coarse graining.
\end{abstract}

\section{Introduction}  
One of the basic principles of quantum theory is the Robertson-Heisenberg uncertainty inequality \cite{hz12,nc00}
\begin{equation}                
\label{eq11}
\Delta _\psi (A)\Delta _\psi (B)\ge\tfrac{1}{4}\ab{\elbows{\psi ,\sqbrac{A,B}\psi}}^2
\end{equation}
where $A,B$ are self-adjoint operators and $\psi$ is a vector state on a Hilbert space. The inequality \eqref{eq11} is usually applied to position and momentum operators $A,B$ in which case $\ab{\elbows{\psi ,\sqbrac{A,B}\psi}}^2=\hbar ^2$ where $\hbar$ is Planck's constant. In this situation, $A$ and $B$ are unbounded operators, but for mathematical rigor we shall only deal with bounded operators. However, our results can be extended to the unbounded case by considering a dense subspace common to the domains of $A$ and $B$. In this paper, we derive a generalization of \eqref{eq11}. This generalization applies to mixed states and contains an additional covariance term that results in a stronger inequality.

The main result in Section~2 is an uncertainty principle for observable operators. This principle contains four parts: a commutator term, a covariance term, a correlation term and a product of variances term. This last term is sometimes called a product of uncertainties. In Section~2 we also characterize, for faithful states, when the uncertainty inequality is an equality. Section~3 introduces the concept of a real-valued observable. If $\rho$ is a state and $A$ is a real-valued observable, we define the $\rho$-average,
$\rho$-deviation and $\rho$-variance of $A$. If $B$ is another real-valued observable, we define the $\rho$-correlation and
$\rho$-covariance of $A,B$. An uncertainty principle for real-valued observables is given in terms of these concepts. An important role is played by the stochastic operator $\atilde$ for $A$. In Section~3 we also define the sharp version of a real-valued observable and characterize when two real-valued observables have the same sharp version

Section~4 illustrates the theory presented in Section~3 with two examples. The first example considers two dichotomic arbitrary real-valued observables. The second example considers the special case of two noisy spin observables. In this case, the uncertainty inequality becomes very simple. Section~5 discusses real-values coarse graining of observables.

\section{Quantum Uncertainty Principle}  
For a complex Hilbert space $H$, we denote the set of bounded linear operators by $\lscript (H)$ and the set of bounded self-adjoint operators by $\lscript _S(H)$. A positive trace-class operator with trace one is a \textit{state} and the set of states on $H$ is denoted by $\sscript (H)$. A state $\rho$ is \textit{faithful} if $\rmtr (\rho C^*C)=0$ for $C\in\lscript (H)$ implies that $C=0$. For 
$\rho\in\sscript (H)$ and $C,D\in\lscript (H)$ we define the sesquilinear form $\elbows{C,D}_\rho =\rmtr (\rho C^*D)$.

\begin{lem}    
\label{lem21}
{\rm{(i)}}\enspace If $C\in\lscript (H)$, $\rho\in\sscript (H)$, then $\rmtr (\rho C^*)=\overline{\rmtr (\rho C)}$.
{\rm{(ii)}}\enspace The form $\elbows{\tbullet ,\tbullet}_\rho$ is a positive semi-definite inner product.
{\rm{(iii)}}\enspace A state $\rho$ is faithful if and only if $\elbows{\tbullet ,\tbullet}_\rho$ is an inner product
\end{lem}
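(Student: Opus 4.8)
The plan is to verify the three parts in turn, using only elementary properties of the trace on trace-class operators: linearity, cyclicity, the identity $\rmtr(X^*)=\overline{\rmtr(X)}$, and the fact that a state $\rho$, being positive, is self-adjoint with $\rho^*=\rho$.

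For (i) I would start from the right-hand side: $\overline{\rmtr(\rho C)}=\rmtr\big((\rho C)^*\big)=\rmtr(C^*\rho^*)=\rmtr(C^*\rho)=\rmtr(\rho C^*)$, the last step by cyclicity. The same chain of identities disposes of the only nonobvious inner-product axiom in (ii) apart from positivity, namely conjugate symmetry: $\overline{\elbows{D,C}_\rho}=\overline{\rmtr(\rho D^*C)}=\rmtr\big((\rho D^*C)^*\big)=\rmtr(C^*D\rho)=\rmtr(\rho C^*D)=\elbows{C,D}_\rho$, while linearity in the second argument and conjugate-linearity in the first are immediate from linearity of the trace. For positivity I would pass to a spectral decomposition $\rho=\sum_i\lambda_i\ket{e_i}\bra{e_i}$ with eigenvalues $\lambda_i\ge 0$ and $\{e_i\}$ orthonormal, and evaluate the trace of the trace-class operator $\rho C^*C$ in the basis $\{e_i\}$, getting $\elbows{C,C}_\rho=\rmtr(\rho C^*C)=\sum_i\lambda_i\elbows{e_i,C^*Ce_i}=\sum_i\lambda_i\doubleab{Ce_i}^2\ge 0$; equivalently one may write $\elbows{C,C}_\rho=\rmtr\big((C\rho^{1/2})^*(C\rho^{1/2})\big)\ge 0$. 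This shows $\elbows{\tbullet,\tbullet}_\rho$ is a positive semi-definite inner product, with definiteness not claimed in general.

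Part (iii) then requires essentially no further work. By (ii) the form $\elbows{\tbullet,\tbullet}_\rho$ satisfies every inner-product axiom except possibly definiteness, so it is a genuine inner product exactly when $\elbows{C,C}_\rho=0$ forces $C=0$; since $\elbows{C,C}_\rho=\rmtr(\rho C^*C)$, this last condition is word for word the definition of $\rho$ being faithful, and both implications of the equivalence follow. I do not anticipate any real obstacle here; the one point worth a sentence of justification is the termwise evaluation of the trace in the positivity step, legitimate because $\rho C^*C$ is trace-class and its trace is independent of the chosen orthonormal basis, and the $\rho^{1/2}$-factorization avoids even that.
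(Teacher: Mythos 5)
Your proof is correct and follows essentially the same route as the paper: conjugate symmetry via $\rmtr(X^*)=\overline{\rmtr(X)}$ and cyclicity, positivity from $\rmtr(\rho C^*C)\ge 0$, and part (iii) read off directly from the definition of faithfulness. The only difference is cosmetic — you justify positivity explicitly via the spectral decomposition of $\rho$ (or the $\rho^{1/2}$ factorization), where the paper simply asserts it from $C^*C\ge 0$, and you take $\rmtr(D^*)=\overline{\rmtr(D)}$ as known where the paper verifies it in a basis.
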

\begin{proof}
(i)\enspace If $D$ is a trace-class operator and $\brac{\phi _i}$ is an orthonormal basis for $H$, we have
\begin{equation*}
\rmtr (D^*)=\sum _i\elbows{\phi _i,D^*\phi _i}=\sum _i\overline{\elbows{D^*\phi _i,\phi _i}}=\sum _i\overline{\elbows{\phi _i,D\phi _i}}
   =\overline{\rmtr (D)}
\end{equation*}
Hence,
\begin{equation*}
\rmtr (\rho C^*)=\rmtr\sqbrac{(C\rho )^*}=\overline{\rmtr (C\rho )}=\overline{\rmtr (\rho C)}
\end{equation*}
(ii)\enspace Applying (i), we have
\begin{equation*}
\overline{\elbows{C,D}_\rho}=\overline{\rmtr (\rho C^*D)}=\rmtr\sqbrac{\rho (C^*D)^*}=\rmtr (\rho D^*C)=\elbows{D,C}_\rho
\end{equation*}
Moreover, since $C^*C\ge 0$ we have $\elbows{C,C}_\rho =\rmtr (\rho C^*C)\ge 0$. Hence, $\elbows{\tbullet ,\tbullet}_\rho$ is a positive semi-definite inner product.
(iii)\enspace If $\elbows{\tbullet ,\tbullet}_\rho$ is an inner product, then
\begin{equation*}
\elbows{C,C}_\rho =\rmtr (\rho C^*C)=0
\end{equation*}
implies $C=0$ so $\rho$ is faithful. Conversely, if $\rho$ is faithful, then
\begin{equation*}
\rmtr (\rho C^*C)=\elbows{C,C}_\rho =0
\end{equation*}
implies $C=0$ so $\elbows{\tbullet ,\tbullet}_\rho$ is an inner product
\end{proof}

For $A\in\lscript _S(H)$ and $\rho\in\sscript (H)$, the $\rho$-\textit{average} (or $\rho$-\textit{expectation}) of $A$ is
$\elbows{A}_\rho=\rmtr (\rho A)$ and $\rho$-\textit{deviation} of $A$ is $D_\rho (A)=A-\elbows{A}_\rho I$ where $I$ is the identity map on $H$. If $A,B\in\lscript _S(H)$, the $\rho$-\textit{correlation} of $A,B$ is
\begin{equation*}
\rmcor _\rho (A,B)=\rmtr\sqbrac{\rho D_\rho (A)D_\rho (B)}
\end{equation*}
Although $\rmcor _\rho (A,B)$ need not be a real number, it is easy to check that $\overline{\rmcor _\rho (A,B)}=\rmcor _\rho (B,A)$. We say that $A$ and $B$ are \textit{uncorrelated} if $\rmcor _\rho (A,B)=0$. The $\rho$-\textit{covariance} of $A,B$ is
$\Delta _\rho (A,B)=\rmre\rmcor _\rho (A,B)$ and the $\rho$-\textit{variance} of $A$ is
\begin{equation*}
\Delta _\rho (A)=\Delta _\rho (A,A)=\rmcor _\rho (A,A)=\rmtr\sqbrac{\rho D_\rho (A)^2}
\end{equation*}
It is straightforward to show that
\begin{align}             
\label{eq21}
\rmcor _\rho (A,B)&=\rmtr (\rho AB)-\elbows{A}_\rho\elbows{B}_\rho\\
\label{eq22}
\Delta _\rho (A,B)&=\rmre\rmtr (\rho AB)-\elbows{A}_\rho\elbows{B}_\rho\\
\label{eq23}
\Delta _\rho (A)&=\elbows{A^2}_\rho-\elbows{A}_\rho ^2
\end{align}
We see from \eqref{eq21} that $A$ and $B$ are $\rho$-uncorrelated if and only if $\rmtr (\rho AB)=\elbows{A}_\rho\elbows{B}_\rho$. We say that $A$ and $B$ \textit{commute} if their commutant $\sqbrac{A,B}=AB-BA=0$.

\begin{exam}{1}  
In the tensor product $H_1\otimes H_2$ let $\rho =\rho _1\otimes\rho _2\in\sscript (H_1\otimes H_2)$ be a product state and let $A_1\in\lscript _S(H_1)$, $A_2\in\lscript _S(H_2)$. Then $A=A_1\otimes I_2$, $B=I_1\otimes A_2\in\lscript _S(H_1\otimes H_2)$ are $\rho$-uncorrelated because
\begin{align*}
\rmtr (\rho AB)&=\rmtr\sqbrac{\rho _1\otimes\rho _2(A_1\otimes I_2)(I_2\otimes A_2)}
   =\rmtr\sqbrac{\rho _1\otimes\rho _2(A_1\otimes A_2)}\\
   &=\rmtr (\rho _1A_1\otimes\rho _2A_2)=\rmtr (\rho _1A_1)\rmtr (\rho _2A_2)\\
   &=\rmtr (\rho _1\otimes\rho _2A_1\otimes I_2)\rmtr (\rho _1\otimes\rho _2I_1\otimes A_2)=\elbows{A}_\rho\elbows{B}_\rho
\end{align*}
This shows that $A,B$ are $\rho$-uncorrelated for any product state $\rho$. Of course, $\sqbrac{A,B}=0$ in this case. However, there are examples of noncommuting operators that are uncorrelated. For instance, on $H=\complex ^2$ let
$\alpha =\begin{bmatrix}1\\0\end{bmatrix}$, $\phi =\begin{bmatrix}0\\1\end{bmatrix}$,
$\psi =\tfrac{1}{\sqrt{2}}\begin{bmatrix}1\\1\end{bmatrix}$. With $\rho =\ket{\alpha}\bra{\alpha}$, $A=\ket{\phi}\bra{\phi}$,
$B=\ket{\psi}\bra{\psi}$ we have
\begin{equation*}
\rmtr (\rho AB)=\elbows{A}_\rho\elbows{B}_\rho =0
\end{equation*}
Hence, $A,B$ are $\rho$-uncorrelated. However,
\begin{align*}
AB&=\elbows{\phi ,\psi}\ket{\phi}\bra{\psi}=\tfrac{1}{\sqrt{2}}\ket{\phi}\bra{\psi}\\
BA&=\elbows{\psi ,\phi}\ket{\psi}\bra{\phi}=\tfrac{1}{\sqrt{2}}\ket{\psi}\bra{\phi}
\end{align*}
so $\sqbrac{A,B}\ne 0$.\hfill\qedsymbol
\end{exam}

We now present our main result.

\begin{thm}    
\label{thm22}
If $A,B\in\lscript _S(H)$ and $\rho\in\sscript (H)$, then
{\rm{(i)}}\enspace $\tfrac{1}{4}\ab{\rmtr\paren{\rho \sqbrac{A,B}}}^2+\sqbrac{\Delta _\rho (A,B)}^2=\ab{\rmcor _\rho (A,B)}^2$
\newline
{\rm{(ii)}}\enspace $\tfrac{1}{4}\ab{\rmtr\paren{\rho \sqbrac{A,B}}}^2+\sqbrac{\Delta _\rho (A,B)}^2\le\Delta _\rho (A)\Delta _\rho (B)$
\end{thm}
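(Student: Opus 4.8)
The plan is to reduce everything to the deviation operators $A'=D_\rho(A)$ and $B'=D_\rho(B)$, which are again self-adjoint, and to exploit the fact that $[A',B']=[A,B]$ since scalar multiples of $I$ drop out of the commutator. The starting point for part~(i) is the standard splitting of a product of self-adjoint operators into its self-adjoint and anti-self-adjoint parts,
\begin{equation*}
A'B'=\tfrac12\paren{A'B'+B'A'}+\tfrac12\sqbrac{A,B},
\end{equation*}
where the anticommutator $A'B'+B'A'$ is self-adjoint and $\sqbrac{A,B}$ is anti-self-adjoint. By Lemma~\ref{lem21}(i), $\rmtr(\rho C)$ is real whenever $C=C^*$ and purely imaginary whenever $C^*=-C$ (apply $\rmtr(\rho C^*)=\overline{\rmtr(\rho C)}$). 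Hence, writing $z=\rmcor_\rho(A,B)=\rmtr(\rho A'B')$, the decomposition above identifies $\rmre z=\tfrac12\rmtr\bigl(\rho(A'B'+B'A')\bigr)=\Delta_\rho(A,B)$ and $\rmim z$ as the imaginary part of $\tfrac12\rmtr(\rho\sqbrac{A,B})$, a purely imaginary quantity, so that $(\rmim z)^2=\tfrac14\ab{\rmtr(\rho\sqbrac{A,B})}^2$. Then $\ab{z}^2=(\rmre z)^2+(\rmim z)^2$ is exactly the identity in~(i).

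For part~(ii) I would first observe that, since $A'$ and $B'$ are self-adjoint, $\rmcor_\rho(A,B)=\rmtr(\rho A'{}^*B')=\elbows{A',B'}_\rho$, the sesquilinear form of the preamble. By Lemma~\ref{lem21}(ii) this form is a positive semi-definite inner product, and the Cauchy--Schwarz inequality holds for any positive semi-definite Hermitian form (the usual proof does not require definiteness). Therefore
\begin{equation*}
\ab{\rmcor_\rho(A,B)}^2=\ab{\elbows{A',B'}_\rho}^2\le\elbows{A',A'}_\rho\elbows{B',B'}_\rho=\Delta_\rho(A)\Delta_\rho(B),
\end{equation*}
using $\elbows{A',A'}_\rho=\rmtr(\rho D_\rho(A)^2)=\Delta_\rho(A)$ and likewise for $B$. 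Combining this with the equality in~(i) gives~(ii) immediately.

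I do not anticipate a genuine obstacle here; the content is essentially a careful repackaging of Cauchy--Schwarz together with the real/imaginary bookkeeping. The only points that require a little care are (a) verifying that the commutator survives the shift by scalars while the anticommutator does not, so that the two pieces in~(i) really are the imaginary and real parts of $\rmcor_\rho(A,B)$, and (b) invoking Cauchy--Schwarz for a form that is only positive \emph{semi}-definite rather than a true inner product, which is legitimate but worth stating explicitly. If desired, one can alternatively derive (ii) directly by expanding $\elbows{\lambda A'+B',\lambda A'+B'}_\rho\ge0$ over $\lambda\in\complex$, which sidesteps any appeal to a named Cauchy--Schwarz statement.
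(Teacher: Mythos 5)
Your proposal is correct and follows essentially the same route as the paper: part (i) is the same real/imaginary decomposition of $\rmcor_\rho(A,B)=\rmtr\sqbrac{\rho D_\rho(A)D_\rho(B)}$, using the fact that $\rmtr(\rho C^*)=\overline{\rmtr(\rho C)}$ forces the commutator contribution to be purely imaginary (the paper phrases this as $\rmtr\paren{\rho\sqbrac{A,B}}=2i\,\rmim\rmtr(\rho AB)$ rather than via the anticommutator split, but the content is identical), and part (ii) is the same application of the Cauchy--Schwarz inequality for the positive semi-definite form $\elbows{\tbullet,\tbullet}_\rho$ evaluated at the deviation operators. Your explicit remarks that the commutator is unaffected by the scalar shift and that Cauchy--Schwarz needs only semi-definiteness are both accurate and match what the paper implicitly relies on.
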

\begin{proof}
(i)\enspace Applying Lemma~\ref{lem21} we have
\begin{align}             
\label{eq24}
\rmtr\paren{\sqbrac{A,B}}&=\rmtr (\rho AB)-\rmtr (\rho BA)=\rmtr (\rho AB)-\overline{\rmtr\sqbrac{\rho (BA)^*}}\notag\\
  &=\rmtr (\rho AB)-\overline{\rmtr(\rho A^*B^*)}=\rmtr (\rho AB)-\overline{\rmtr (\rho AB)}\notag\\
  &=2i\,\rmim\sqbrac{\rmtr (\rho AB)}
\end{align}
From \eqref{eq22} and \eqref{eq24} we obtain
\begin{align*}
\tfrac{1}{4}\ab{\rmtr\paren{\rho\sqbrac{A,B}}}^2+\sqbrac{\Delta _\rho (A,B)}^2
   &=\sqbrac{\rmim (\rho AB)}^2+\sqbrac{\rmre\rmtr (\rho A B)-\elbows{A}_\rho\elbows{B}_\rho}^2\\
   &=\ab{\rmre\rmtr (\rho AB)-\elbows{A}_\rho\elbows{B}_\rho +i\,\rmim\rmtr (\rho AB)}^2\\
   &=\ab{\rmtr (\rho AB)-\elbows{A})_\rho\elbows{B}_\rho}^2=\ab{\rmcor _\rho (A,B)}^2
\end{align*}
(ii)\enspace Applying Lemma~\ref{lem21}(ii), the form $\elbows{C,D}_\rho=\rmtr(\rho C^*D)$ is a positive semi-definite inner product. Hence, Schwarz's inequality holds and we have
\begin{align*}
\ab{\rmcor _\rho (A,B)}^2&=\ab{\rmtr\sqbrac{\rho D_\rho (A)D\rho (B)}}^2=\ab{\elbows{D_\rho (A),D_\rho (B)}_\rho}^2\\
   &\le\elbows{D_\rho (A),D_\rho (A)}_\rho\elbows{D_\rho (B),D_\rho (B)}_\rho
   =\rmtr\sqbrac{\rho D_\rho (A)^2}\rmtr\sqbrac{\rho D_\rho (B)^2}\\
   &=\Delta _\rho (A)\Delta _\rho (B)\qedhere
\end{align*}
\end{proof}

We call Theorem~\ref{thm22}(i) the \textit{uncertainty equation} and Theorem~\ref{thm22}(ii) the \textit{uncertainty inequality}. Together, they are called the \textit{uncertainty principle}. Notice that Theorem~\ref{thm22}(ii) is a considerable strengthening of the usual Robertson-Heisenberg inequality \eqref{eq11} since it contains the term $\sqbrac{\Delta _\rho (A,B)}^2$ and it applies to arbitrary states. Thus, even when $\sqbrac{A,B}=0$ we still have an uncertainty relation
\begin{equation*}
\sqbrac{\Delta _\rho (A,B)}^2=\ab{\rmtr\sqbrac{\rho \Delta _\rho (A)\Delta _\rho (B)}}^2\le\Delta _\rho (A)\Delta _\rho (B)
\end{equation*}

\begin{lem}    
\label{lem23}
A state $\rho$ is faithful if and only if the eigenvalues of $\rho$ are positive.
\end{lem}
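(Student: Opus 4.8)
The plan is to use the spectral decomposition of $\rho$. Since $\rho\in\sscript (H)$ is a positive trace-class operator, it is compact, so by the spectral theorem there is an orthonormal system $\brac{\phi _i}$ of eigenvectors of $\rho$ with eigenvalues $\lambda _i>0$ such that $\rho =\sum _i\lambda _i\ket{\phi _i}\bra{\phi _i}$, where $\brac{\phi _i}$ spans $(\ker\rho )^\perp$. I read the hypothesis ``the eigenvalues of $\rho$ are positive'' as $\ker\rho =\brac{0}$, i.e.\ $\rho$ is injective, equivalently $\brac{\phi _i}$ is an orthonormal basis of all of $H$. (On an infinite-dimensional $H$ one cannot prevent $0$ from lying in the spectrum of a trace-class operator, but $0$ need not be an \emph{eigenvalue}; this is the one point that needs care.)

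First I would prove the contrapositive of the ``only if'' direction: suppose some eigenvalue of $\rho$ vanishes, and pick a unit vector $\phi\in\ker\rho$. Set $C=\ket{\phi}\bra{\phi}\in\lscript (H)$, which is a nonzero rank-one projection, so $C^*C=C\ne 0$. Then
\begin{equation*}
\rmtr (\rho C^*C)=\rmtr\paren{\rho\ket{\phi}\bra{\phi}}=\elbows{\phi ,\rho\phi}=0 ,
\end{equation*}
so $\rho$ is not faithful. Hence a faithful state can have no zero eigenvalue.

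For the converse, assume $\ker\rho =\brac{0}$, so $\brac{\phi _i}$ is an orthonormal basis of $H$ and every $\lambda _i>0$. Let $C\in\lscript (H)$ with $\rmtr (\rho C^*C)=0$. Since $\rho C^*C$ is trace-class, its trace may be computed in the basis $\brac{\phi _i}$, and using $\rho\phi _i=\lambda _i\phi _i$ together with self-adjointness of $\rho$,
\begin{equation*}
\rmtr (\rho C^*C)=\sum _i\elbows{\phi _i,\rho C^*C\phi _i}=\sum _i\lambda _i\elbows{\phi _i,C^*C\phi _i}=\sum _i\lambda _i\doubleab{C\phi _i}^2 .
\end{equation*}
Each summand is nonnegative and each $\lambda _i>0$, so the vanishing of the sum forces $C\phi _i=0$ for every $i$; since $\brac{\phi _i}$ is a basis, $C=0$. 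Thus $\rho$ is faithful.

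The only genuinely delicate point is the bookkeeping around $0$: one must distinguish ``$0$ is not an eigenvalue of $\rho$'' (i.e.\ $\rho$ injective) from ``$0$ is not in the spectrum of $\rho$'', and observe that the spectral sum for $\rho$ runs over a full orthonormal basis precisely when $\ker\rho =\brac{0}$. Everything else is a routine Parseval-type computation; neither Cauchy--Schwarz nor the earlier lemmas are needed here (although this lemma, combined with Lemma~\ref{lem21}(iii), connects faithfulness, positivity of the spectrum, and $\elbows{\tbullet ,\tbullet}_\rho$ being a genuine inner product).
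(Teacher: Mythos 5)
Your proof is correct and follows essentially the same route as the paper: a spectral decomposition $\rho=\sum_i\lambda_i\ket{\phi_i}\bra{\phi_i}$ giving $\rmtr(\rho C^*C)=\sum_i\lambda_i\doubleab{C\phi_i}^2$ for one direction, and the rank-one projection onto a null eigenvector as the witness for the other. Your extra care in distinguishing ``$0$ is not an eigenvalue'' from ``$0$ is not in the spectrum'' in infinite dimensions is a worthwhile refinement of a point the paper glosses over, but it does not change the argument.
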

\begin{proof}
Suppose the eigenvalues $\lambda _i$ of $\rho$ are positive with corresponding normalized eigenvectors $\phi _i$. Then we can write $\rho =\sum\lambda _i\ket{\phi _i}\bra{\phi _i}$ for the orthonormal basis $\brac{\phi _i}$. For any $A\in\lscript (H)$ we obtain
\begin{equation*}
\rmtr (\rho A^*A)=\sum\lambda _i\rmtr\paren{\ket{\phi _i}\bra{\phi _i}A^*A}=\sum\lambda _i\elbows{A\phi _i,A\phi _i}
   =\sum\lambda _i\doubleab{A\phi _i}^2
\end{equation*}
Hence, $\rmtr (\rho A^*A)=0$ implies $A\phi _i=0$ for all $i$. It follows that $A=0$. Conversely, if $0$ is an eigenvalue of $\rho$ and
$\phi$ is a corresponding unit eigenvector, then setting $P_\phi=\ket{\phi}\bra{\phi}$ we have
\begin{equation*}
\rmtr (\rho P_\phi ^*P_\phi )=\rmtr (\rho P_\phi )=\elbows{\phi ,\rho\phi}=0
\end{equation*}
But $P_\phi\ne 0$ so $\rho$ is not faithful.
\end{proof}

\begin{thm}    
\label{thm24}
If $\rho$ is faithful. then the following statements are equivalent.
{\rm{(i)}}\enspace The uncertainty inequality of Theorem~\ref{thm22}(ii) is an equality.
{\rm{(ii)}}\enspace $D_\rho (B)=\alpha D_\rho (A)$ for $\alpha\in\real$.
{\rm{(iii)}}\enspace $B=\alpha A+\beta I$ for $\alpha ,\beta\in\real$. If one of the conditions holds, then
\begin{equation}             
\label{eq25}
\sqbrac{\Delta _\rho (A,B)}^2=\ab{\rmcor _\rho (A,B)}^2=\Delta _\rho (A)\Delta )\rho (B)
\end{equation}
\end{thm}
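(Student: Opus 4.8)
The plan is to funnel all three implications through Theorem~\ref{thm22}(i), which already rewrites the left-hand side of the uncertainty inequality as $\ab{\rmcor _\rho (A,B)}^2$. So statement (i) is equivalent to the assertion that Schwarz's inequality used in the proof of Theorem~\ref{thm22}(ii), namely $\ab{\elbows{D_\rho (A),D_\rho (B)}_\rho}^2\le\elbows{D_\rho (A),D_\rho (A)}_\rho\elbows{D_\rho (B),D_\rho (B)}_\rho$, is an equality. Because $\rho$ is faithful, Lemma~\ref{lem21}(iii) says $\elbows{\tbullet ,\tbullet}_\rho$ is a genuine inner product, so equality in Schwarz's inequality forces $D_\rho (A)$ and $D_\rho (B)$ to be linearly dependent. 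Away from the degenerate case $D_\rho (A)=0$ (where $\Delta _\rho (A)=0$, both sides vanish, and the statement must be read with the roles of $A,B$ exchanged), this yields $D_\rho (B)=\lambda D_\rho (A)$ for some $\lambda\in\complex$.

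The next step is to promote $\lambda$ to a real number. Since $\rho\ge 0$ and $A$ is self-adjoint, Lemma~\ref{lem21}(i) gives $\elbows{A}_\rho=\rmtr (\rho A)\in\real$, so $D_\rho (A)=A-\elbows{A}_\rho I$ is self-adjoint, and likewise $D_\rho (B)$. Taking adjoints in $D_\rho (B)=\lambda D_\rho (A)$ gives $D_\rho (B)=\overline{\lambda}D_\rho (A)$, hence $(\lambda-\overline{\lambda})D_\rho (A)=0$ and $\lambda\in\real$; this establishes (i)$\Rightarrow$(ii). The equivalence (ii)$\Leftrightarrow$(iii) is then pure bookkeeping: expanding $D_\rho (B)=\alpha D_\rho (A)$ gives $B=\alpha A+\beta I$ with $\beta=\elbows{B}_\rho-\alpha\elbows{A}_\rho\in\real$, and conversely the scalar terms $\elbows{A}_\rho I$ and $I$ drop out of the deviation, returning $D_\rho (B)=\alpha D_\rho (A)$.

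Finally, to close the loop with (ii)$\Rightarrow$(i) and to deduce \eqref{eq25}: starting from $D_\rho (B)=\alpha D_\rho (A)$ with $\alpha\in\real$, bilinearity of $\elbows{\tbullet ,\tbullet}_\rho$ gives $\rmcor _\rho (A,B)=\alpha\elbows{D_\rho (A),D_\rho (A)}_\rho=\alpha\Delta _\rho (A)$, which is real, and $\Delta _\rho (B)=\alpha ^2\Delta _\rho (A)$; hence $\ab{\rmcor _\rho (A,B)}^2=\alpha ^2\Delta _\rho (A)^2=\Delta _\rho (A)\Delta _\rho (B)$, which by Theorem~\ref{thm22}(i) is exactly the equality case of the uncertainty inequality. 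Moreover, since $\rmcor _\rho (A,B)$ is real here, $\Delta _\rho (A,B)=\rmre\rmcor _\rho (A,B)=\rmcor _\rho (A,B)$, so \eqref{eq25} follows. I expect the only genuinely delicate point to be the argument that the Schwarz proportionality constant is forced to be real (using self-adjointness of the deviations), together with a clean handling of the degenerate case $D_\rho (A)=0$; the remaining implications are just unwinding the definitions of deviation, correlation and variance.
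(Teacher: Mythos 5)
Your proposal is correct and follows essentially the same route as the paper: reduce (i) to the equality case of Schwarz's inequality for the inner product $\elbows{\tbullet,\tbullet}_\rho$ (which is genuine by Lemma~\ref{lem21}(iii) since $\rho$ is faithful), extract $D_\rho(B)=\alpha D_\rho(A)$ with $\alpha$ forced real by self-adjointness of the deviations, and handle (ii)$\Leftrightarrow$(iii) and equation \eqref{eq25} by direct computation. The only difference is that you explicitly flag the degenerate case $D_\rho(A)=0$ (where linear dependence does not immediately give $D_\rho(B)$ as a multiple of $D_\rho(A)$), a point the paper's proof passes over silently.
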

\begin{proof}
(i)$\Rightarrow$(ii)\enspace If the uncertainty inequality is an equality, then
\begin{equation}             
\label{eq26}
\ab{\rmtr\sqbrac{\rho D_\rho (A)D_\rho (B)}}^2=\Delta _\rho(A)\Delta _\rho (B)
\end{equation}
We can rewrite \eqref{eq26} as
\begin{equation*}
\ab{\elbows{D_\rho (A),D_\rho (B)}_\rho}^2=\elbows{D_\rho (A),D_\rho (A)}_\rho\elbows{D_\rho (B),D_\rho (B)}_\rho
\end{equation*}
Since we have equality in Schwarz's inequality and $\elbows{\tbullet ,\tbullet}_\rho$ is an inner product, it follows that $D_\rho (B)=\alpha D_\rho (A)$ for some $\alpha\in\complex$. Since 
$D_\rho (B)^*=D_\rho (B)$ and $D_\rho (A)^*=D_\rho (A)$ we conclude that $\alpha\in\real$.
(ii)$\Rightarrow$(iii)\enspace If $D_\rho (B)=\alpha D_\rho (A)$ for $\alpha\in\real$, we have
\begin{equation*}
B-\elbows{B}_\rho I=\alpha\paren{A-\elbows{A}_\rho I)}
\end{equation*}
Hence, letting $\beta =\elbows{B}_\rho -\alpha\elbows{A}_\rho$ we have $B=\alpha A+\beta I$. Since $A,B\in\lscript_S(H)$ and
$\alpha\in\real$, we have that $\beta\in\real$.
(iii)$\Rightarrow$(i)\enspace If (iii) holds, then
\begin{equation*}
\elbows{B}_\rho =\rmtr (\rho B)=\alpha\rmtr (\rho A)+\beta =\alpha\elbows{A}_\rho +\beta
\end{equation*}
Hence, $\beta =\elbows{B}_\rho -\alpha\elbows{A}_\rho$ so that
\begin{align*}
D_\rho (B)&=B-\elbows{B}_\rho I=\alpha A+\beta I-\elbows{B}_\rho I\\
   &=\alpha A+\elbows{B}_\rho I-\alpha\elbows{A}_\rho I-\elbows{B}_\rho I=\alpha D_\rho (A)
\end{align*}
Thus, (ii) holds and it follows that \eqref{eq26} holds and this implies (i). Equation~\eqref{eq25} holds because \eqref{eq26} holds.
\end{proof}

\begin{exam}{2}  
The simplest faithful state when $\dim H=n<\infty$ is $\rho =I/n$. Then $\elbows{A,B}_\rho =\tfrac{1}{n}\,\rmtr (A^*B)$ which is essentially the Hilbert-Schmidt inner product $\elbows{A,B}_{HS}=\rmtr (A^*B)$. In this case for $A,B\in\lscript _S(H)$ we have
$\elbows{A}_\rho =\tfrac{1}{n}\,\rmtr (A)$, $D_\rho (A)=A-\tfrac{1}{n}\,\rmtr (A)I$. The other statistical concepts become:
\begin{align*}
\rmcor _\rho (A,B)&=\rmtr\sqbrac{\rho D_\rho (A)D_\rho (B)}=\tfrac{1}{n}\,\rmtr (AB)-\tfrac{1}{ n^2}\,\rmtr (A)\rmtr (B)\\
\Delta _\rho (A,B)&=\tfrac{1}{n}\,\rmre\rmtr (AB)-\tfrac{1}{n^2}\,\rmtr (A)\rmtr (B)\\
\Delta _\rho (A)&=\tfrac{1}{n}\,\rmtr (A^2)-\sqbrac{\tfrac{1}{n}\,\rmtr (A)}^2\\
\rmtr\paren{\rho\sqbrac{A,B}}&=\tfrac{2i}{n}\,\rmim\rmtr (AB)
\end{align*}
The uncertainty principle is given by:
\begin{align*}
\sqbrac{\rmim\rmtr (AB)}^2&+\sqbrac{\rmre\rmtr(AB)-\tfrac{1}{n}\,\rmtr (A)\rmtr (B)}^2
   =\ab{\rmtr (AB)-\tfrac{1}{n}\,\rmtr (A)\rmtr (B)}^2\\
   &\le\sqbrac{\rmtr (A^2)-\tfrac{1}{n}\,\rmtr (A)^2}\sqbrac{\rmtr (B^2)-\tfrac{1}{n}\,\rmtr (B)^2}\hskip 6.5pc\square
\end{align*}
\end{exam}

\section{Real-Valued Observables}  
An \textit{effect} is an operator $C\in\lscript _S(H)$ that satisfies $0\le C\le I$ \cite{bgl95,hz12,lud51}. Effects are thought of as two outcomes \ityes-\itno\ measurements. When the result of measuring $C$ is \ityes , we say that $C$ \textit{occurs} and when the result is \itno , then $C$ \textit{does not occur}. A \textit{real-valued observable} is a finite set of effects $A=\brac{A_x\colon x\in\Omega _A}$ where $\sum\limits _{x\in\Omega _A}A_x=I$ and $\Omega _A\subseteq\real$ is the \textit{outcome space} for $A$. The effect $A_x$ occurs when the result of measuring $A$ is the outcome $x$. The condition $\sum\limits _{x\in\Omega _A}A_x=I$ specifies that one of the possible outcomes of $A$ must occur. An observable is also called a \textit{positive operator-valued measure} (POVM). We say $A$ is \textit{sharp} if $A_x$ is a projection for all $x\in\Omega _A$ and in this case, $A$ is a \textit{projection-valued measure} 
\cite{hz12, nc00}. Corresponding to $A$ we have the \textit{stochastic operator} $\atilde\in\lscript (H)$ given by
$\atilde =\sum\limits _{x\in\Omega _A}xA_x$. Notice that we need $A$ to be real-valued in order for $\atilde$ to exist.

We now apply the theory presented in Section~2 to real-valued observables. For $\rho\in\sscript (H)$, the $\rho$-\textit{average}
(or $\rho$-\textit{expectation}) of $A$ is defined by
\begin{equation}                
\label{eq31}
\elbows{A}_\rho=\elbows{\atilde\,}_\rho=\rmtr (\rho\atilde\,)=\sum _{x\in\Omega _A}x\rmtr (\rho A_x)
\end{equation}
We interpret $\rmtr (\rho A_x)$ as the probability that a measurement of $A$ results in the outcome $x$ when the system is in state
$\rho$. Thus, \eqref{eq31} says that the $\rho$-average of $A$ is the sum of its outcomes times the probabilities these outcomes occur. We define the $\rho$-\textit{deviation} of $A$ by
\begin{align*}
D_\rho (A)&=D_\rho (\atilde\,)=\atilde -\elbows{A}_\rho I=\sum _{x\in\Omega _A}xA_x-\sum _{x\in\Omega _A}x\rmtr (\rho A_x)I\\
   &=\sum _{x\in\Omega _A}x\sqbrac{A_x-\rmtr (\rho A_x)I}
\end{align*}
If $A,B$ are real-valued observables, the $\rho$-\textit{correlation} of $A,B$ is $\rmcor _\rho (A,B)=\rmcor _\rho (\atilde ,\btilde\,)$,
$\rho$-\textit{covariance} of $A,B$ is $\Delta _\rho (A,B)=\Delta _\rho (\atilde ,\btilde\,)$ and the $\rho$-\textit{variance} of $A$ is
$\Delta _\rho (A)=\Delta _\rho (\atilde\,)$. Applying \eqref{eq21} we obtain
\begin{align}                
\label{eq32}
\rmcor _\rho (A,B)&=\rmtr (\rho\atilde\btilde\,)-\elbows{\atilde\,}_\rho\elbows{\btilde\,}_\rho
   =\rmtr\paren{\rho\sum _{x,y}xyA_xB_y}-\elbows{\atilde\,}_\rho\elbows{\btilde\,}_\rho\notag\\
   &=\sum _{x,y}xy\sqbrac{\rmtr (\rho A_xB_y)-\rmtr (\rho A_x)\rmtr (\rho B_y)}
\end{align}
It follows that
\begin{align}                
\label{eq33}
\Delta _\rho (A,B)&=\sum _{x,y}xy\sqbrac{\rmre\rmtr (\rho A_xB_y)-\rmtr (\rho A_x)\rmtr (\rho B_y)}\\
\intertext{and}
\label{eq34}
\Delta _\rho (A)&=\sum _{x,y}xy\sqbrac{\rmtr (\rho A_xA_y)-\rmtr (\rho A_x)\rmtr (\rho A_y)}
\end{align}
We also have by \eqref{eq24} that
\begin{align}                
\label{eq35}
\rmtr\paren{\rho\sqbrac{\atilde ,\btilde\,}}&=2i\,\rmim\rmtr (\rho\atilde\btilde\,)=2i\,\rmim\rmtr\paren{\rho\sum _{x,y}xyA_xB_y)}\notag\\
   &=2i\sum _{x,y}xy\,\rmim\rmtr (\rho A_xB_y)
\end{align}
Substituting $\atilde,\btilde$ for $A,B$ in Theorem~\ref{thm22} gives an uncertainty principle for real-valued observables.

Two observables $A,B$ are \textit{compatible} (or \textit{jointly measurable}) if there exists a \textit{joint observable}
$C_{(x,y)}$, $(x,y)\in\Omega _a\times\Omega _B$, such that $A_x=\sum\limits _yC_{(x,y)}$, $B_y=\sum\limits _xC_{(x,y)}$ for all
$x\in\Omega _a$, $y\in\Omega _B$. If $\sqbrac{A_x,B_y}=0$ for all $x,y$, then $A,B$ are compatible with $C_{(x,y)}=A_xB_y$ for all $(x,y)\in\Omega _A\times\Omega _B$. However, if $A,B$ are compatible, they need not commute \cite{hz12}. If $A,B$ are compatible real-valued observables, then
\begin{align*}
\atilde&=\sum _xxA_x=\sum _{x,y}xC_{(x,y)}\\
\btilde&=\sum _yyB_y=\sum _{x,y}yC_{(x,y)}
\end{align*}
Using \eqref{eq32}, \eqref{eq33}, \eqref{eq34}, \eqref{eq35} we can write
$\rmcor _\rho (A,B),\Delta _\rho (A,B),\Delta _\rho (A),\Delta _\rho (B)$ and $\rmtr\paren{\rho\sqbrac{\atilde ,\btilde\,}}$ in terms of
$C_{(x,y)}$. Hence, we can express the uncertainty principle in terms of $C_{(x,y)}$.

If $A=\brac{A_x\colon x\in\Omega _a}$ is a real-valued observable, then $\atilde$ has spectral decomposition
$\atilde =\sum\limits _{i=1}^n\lambda _iP_i$ where $\lambda _i\in\real$ are the distinct eigenvalues of $\atilde$ and $P_i$ are projections with $\sum P_i=I$. We call $\ahat =\brac{P_i\colon i=1,2,\ldots ,n}$ the \textit{sharp version} of $A$. Then $\ahat$ is a real-valued observable with outcome space $\Omega _{\ahat}=\brac{\lambda _i\colon i=1,2,\ldots ,n}$. Since $(\ahat\,)^\sim =\atilde$, $A$ and $\ahat$ have the same stochastic operator. It follows that $\elbows{A}_\rho=\elbows{\ahat\,}_\rho$,
$\Delta _\rho (A)=\Delta _\rho (\ahat\,)$ and if $B$ is another real-valued observable, then
$\rmcor _\rho (A,B)=\rmcor _\rho (\ahat ,\bhat\,)$ and $\Delta _\rho (A,B)=\Delta _\rho (\ahat ,\bhat\,)$.

\begin{lem}    
\label{lem31}
The following statements are equivalent.
{\rm{(i)}}\enspace $\ahat =\bhat$.
{\rm{(ii)}}\enspace $\atilde =\btilde$.
{\rm{(iii)}}\enspace $\elbows{A}_\rho=\elbows{B}_\rho$ for all $\rho\in\sscript (H)$.
\end{lem}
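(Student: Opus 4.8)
The plan is to prove (i)$\,\Leftrightarrow\,$(ii) directly from uniqueness of the spectral decomposition, then (ii)$\,\Rightarrow\,$(iii) by a one-line computation, and finally close the loop with (iii)$\,\Rightarrow\,$(ii) using pure states. The crucial preliminary remark is that $\atilde=\sum_{x\in\Omega_A}xA_x$ is a finite real-linear combination of the self-adjoint effects $A_x$, so $\atilde\in\lscript_S(H)$ (and likewise $\btilde$); consequently $\atilde$ possesses the unique spectral decomposition $\atilde=\sum_{i=1}^n\lambda_iP_i$ into distinct real eigenvalues $\lambda_i$ and orthogonal spectral projections $P_i$ with $\sum P_i=I$, which is exactly the data defining $\ahat$. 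Since $(\ahat\,)^\sim=\sum_i\lambda_iP_i=\atilde$, if $\ahat=\bhat$ then $\atilde=(\ahat\,)^\sim=(\bhat\,)^\sim=\btilde$, giving (i)$\,\Rightarrow\,$(ii). Conversely, if $\atilde=\btilde$, uniqueness of the spectral decomposition forces the eigenvalue--projection pairs $(\lambda_i,P_i)$ of $\atilde$ to coincide with those of $\btilde$; as an observable $\ahat$ is nothing but the assignment $\lambda_i\mapsto P_i$ on the outcome space $\brac{\lambda_i}$, so $\ahat=\bhat$ and (ii)$\,\Rightarrow\,$(i).

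For (ii)$\,\Rightarrow\,$(iii) one simply invokes \eqref{eq31}: $\elbows{A}_\rho=\rmtr(\rho\atilde\,)=\rmtr(\rho\btilde\,)=\elbows{B}_\rho$ for every $\rho\in\sscript(H)$. For the remaining implication (iii)$\,\Rightarrow\,$(ii), set $C=\atilde-\btilde\in\lscript_S(H)$; the hypothesis says $\rmtr(\rho C)=\elbows{A}_\rho-\elbows{B}_\rho=0$ for all states $\rho$. Taking $\rho=\ket{\phi}\bra{\phi}$ for an arbitrary unit vector $\phi\in H$ gives $\elbows{\phi,C\phi}=\rmtr(\ket{\phi}\bra{\phi}C)=0$, and by homogeneity $\elbows{\phi,C\phi}=0$ for every $\phi\in H$. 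Since $H$ is a complex Hilbert space, the polarization identity then yields $C=0$, i.e.\ $\atilde=\btilde$.

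There is no genuine obstacle in this argument; the only point that warrants care is bookkeeping about what an observable \emph{is}. The sharp version $\ahat$ is not merely the unordered set $\brac{P_i}$ but the observable carrying the outcome space $\brac{\lambda_i}$ together with the map $\lambda_i\mapsto P_i$, so the equality $\ahat=\bhat$ asserts that both the eigenvalues and the corresponding spectral projections agree. That is precisely the information recovered from $\atilde=\btilde$ via uniqueness of the spectral decomposition of a bounded self-adjoint operator, which is why (i) and (ii) are in fact equivalent almost by definition, while (iii) is tied to (ii) by the faithfulness-free fact that the pure states already separate self-adjoint operators.
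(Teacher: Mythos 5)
Your proof is correct and follows essentially the same route as the paper's: $(\ahat\,)^\sim=\atilde$ gives (i)$\Rightarrow$(ii), the trace formula gives (ii)$\Rightarrow$(iii), and states separating self-adjoint operators plus uniqueness of the spectral decomposition closes the loop. The only difference is that you spell out the two steps the paper compresses into ``It follows that $\ahat=\bhat$'' (pure states and polarization to get $\atilde=\btilde$, then spectral uniqueness), which is a welcome clarification rather than a deviation.
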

\begin{proof}
(i)$\Rightarrow$(ii)\enspace If $\ahat =\bhat$ then 
\begin{equation*}
\atilde =(\ahat\,)^\sim=(\bhat\,)^\sim =\btilde
\end{equation*}
(ii)$\Rightarrow$(iii)\enspace If $\atilde =\btilde$ then
\begin{equation*}
\elbows{A}_\rho =\elbows{\atilde\,}_\rho =\elbows{\btilde\,}_\rho =\elbows{B}_\rho
\end{equation*}
(iii)$\Rightarrow$(i)\enspace If $\elbows{A}_\rho =\elbows{B}_\rho$ for all $\rho\in\sscript (H)$, then
$\elbows{\atilde\,}_\rho =\elbows{\btilde\,}_\rho$ for all $\rho\in\sscript (H)$. It follows that $\ahat =\bhat$.
\end{proof}

Let $\atilde =\sum xA_x=\sum\lambda _iP_i$ so $\ahat =\brac{P_i\colon i=1,2,\ldots ,n}$ is a sharp version of $A$. Let
$B=\brac{B_x\colon x\in\Omega _A}$ be the real-valued observable given by $B_x=\sum\limits _{i=1}^nP_iA_xP_i$. We conclude that $A$ and $B$ have the same sharp version because
\begin{align*}
\btilde&=\sum _xxB_x=\sum _iP_i\sum _xxA_xP_i=\sum _iP_i\atilde P_i=\sum _iP_i\sum _j\lambda _jP_jP_i\\
   &=\sum _{i,j}\lambda _iP_iP_jP_i=\sum _i\lambda _iP_i=\atilde
\end{align*}
so by Lemma~\ref{lem31}, $\ahat =\bhat$. We say that $B$ is a \textit{conjugate} of $A$. Letting $C_{ix}=P_iA_xP_i$, we have that
\begin{equation*}
\brac{C_{ix}\colon i=1,2,\ldots ,n, x\in\Omega _A}
\end{equation*}
is an observable and $\sum\limits _iC_{ix}=B_x$, $\sum\limits _xC_{ix}=P_i$. It follows that $B$ and $\ahat$ are compatible with joint observable $\brac{C_{ix}}$. We say that an observable $A=\brac{A_x\colon x\in\Omega _A}$ is \textit{commutative} if
$\sqbrac{A_x,A_y}=0$ for all $x,y\in\Omega _A$. Notice that if $A$ is sharp, then $A$ is commutative. However, there are many unsharp observables that are commutative.

\begin{thm}    
\label{thm32}
If $A$ is commutative, then $B$ is conjugate to $A$ if and only if $B=A$.
\end{thm}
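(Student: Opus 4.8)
The plan is to reduce the stated equivalence to the single operator identity
$A_x=\sum_{i=1}^nP_iA_xP_i$ for every $x\in\Omega_A$, where $\atilde=\sum_{i=1}^n\lambda_iP_i$ is the spectral decomposition of the stochastic operator, and then to establish that identity from the commutativity of $A$.

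First I would unpack the definition: by construction ``$B$ is conjugate to $A$'' means exactly that $B_x=\sum_{i=1}^nP_iA_xP_i$ for all $x$. So if the identity above holds, then the hypothesis ``$B$ is conjugate to $A$'' forces $B_x=A_x$ for every $x$, i.e.\ $B=A$; and conversely, if $B=A$ then $B_x=A_x=\sum_iP_iA_xP_i$, so $B$ is conjugate to $A$. Thus both implications of the ``if and only if'' collapse to the one identity, and the whole theorem amounts to verifying it under the commutativity assumption.

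Next I would prove the identity. Since $\atilde=\sum_{y\in\Omega_A}yA_y$, commutativity of $A$ (i.e.\ $\sqbrac{A_x,A_y}=0$ for all $x,y$) immediately gives $\sqbrac{A_x,\atilde\,}=0$ for each $x$. Because $\atilde$ has only the finitely many distinct eigenvalues $\lambda_1,\dots,\lambda_n$, each spectral projection $P_i$ can be written as $p_i(\atilde\,)$ for the Lagrange interpolation polynomial $p_i$ with $p_i(\lambda_j)=\delta_{ij}$, so $\sqbrac{A_x,P_i}=0$ for all $i$ and $x$. Then, using $P_i^2=P_i$ and $\sum_iP_i=I$, one gets $\sum_iP_iA_xP_i=\sum_iA_xP_i^2=\sum_iA_xP_i=A_x\sum_iP_i=A_x$, which is the identity.

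The main obstacle — really the only nontrivial step — is the passage from $\sqbrac{A_x,\atilde\,}=0$ to $\sqbrac{A_x,P_i}=0$. The Lagrange-polynomial representation of each $P_i$ (legitimate precisely because $\atilde$ has finite spectrum here) disposes of it cleanly; equivalently one could note that each $P_i$ lies in the abelian von Neumann algebra generated by $\atilde$, hence commutes with every operator commuting with $\atilde$. Everything else is a one-line computation, and the reverse direction $B=A\Rightarrow B$ conjugate to $A$ is then immediate once the identity is in hand.
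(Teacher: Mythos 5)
Your proposal is correct and follows essentially the same route as the paper: both reduce everything to the identity $A_x=\sum_iP_iA_xP_i$, obtained from $\sqbrac{A_x,\atilde\,}=0$ and hence $\sqbrac{A_x,P_i}=0$. The only difference is that you justify the passage to $\sqbrac{A_x,P_i}=0$ explicitly via Lagrange interpolation polynomials, where the paper simply invokes the spectral theorem.
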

\begin{proof}
If $A$ is commutative, we show that $A$ is conjugate to $A$. Since
\begin{equation*}
\ahat =\sum xA_x=\sum\lambda _iP_i
\end{equation*}
we have that $\sqbrac{\ahat ,A_x}=0$ for all $x\in\Omega _A$. By the spectral theorem, $\sqbrac{A_x,P_i}=0$ for all $x,i$ so
$A_x=\sum P_iA_xP_i$. Therefore, $A$ is conjugate to $A$. Conversely, suppose $A$ is commutative and $B$ is conjugate to $A$. Then $B_x=\sum\limits _iP_iA_xP_i$ for all $x\in\Omega _A$. As before, we have that $\sqbrac{\ahat _x,A_x}=0$ for all
$x\in\Omega _A$ so $\sqbrac{A_x,P_i}=0$ for all $x,i$. Hence,
\begin{equation*}
B_x=\sum _iP_iA_xP_i=A_x\sum _iP_i=A_x
\end{equation*}
for all $x\in\Omega _B=\Omega _A$ so $B=A$.
\end{proof}
Thus, nontrivial conjugates only occur in the nonclassical case where $A$ is noncommutative.

\section{More Examples}  
This section illustrates the theory in Sections~2 and 3 with two examples.

\begin{exam}{3}  
A two outcome observable is called a \textit{dichotomic observable}. Of course, a dichotomic observable is commutative but it need not be sharp. Let $A=\brac{A_1,I-A_1}$ be a dichotomic observable with $\Omega _A=\brac{1,-1}$. Then
\begin{align*}
\atilde&=A_1-(I-A_1)=2A_1-I\\
\elbows{A}_\rho&=\rmtr (\rho\atilde\,)=\rmtr\sqbrac{\rho (2A_1-I)}=2\,\rmtr (\rho A_1)-1\\
D_\rho (A)&=\atilde -\elbows{A}_\rho I=2A_1-I-2\,\rmtr (\rho A_1)I+I=2\sqbrac{A_1-\rmtr (\rho A_1)I}
\end{align*}
If $B=\brac{B_1,I-B_1}$ is another dichotomic observable with $\Omega _B=\brac{1,-1}$, then
\begin{align}      
\label{eq41}
\rmcor _\rho (A,B)&=\rmtr (\rho\atilde\btilde\,)-\elbows{A}_\rho\elbows{B}_\rho\notag\\
   &=\rmtr\sqbrac{\rho (2A_1-I)(2B_1-I)}-\sqbrac{2\,\rmtr (\rho A_1-1)}\sqbrac{2\,\rmtr (\rho B_1-1)}\notag\\
   &=\rmtr\sqbrac{\rho (4A_1B_1-2A_1-2B_1+I)}-4\,\rmtr (\rho A_1)\rmtr (\rho B_1)\notag\\
   &\quad +2\,\rmtr (\rho A_1)+2\rmtr (\rho B_1)-1\notag\\
   &=4\sqbrac{\rmtr (\rho A_1B_1)-\rmtr (\rho A_1)\rmtr (\rho B_1)}
\end{align}
Hence,
\begin{align*}
\Delta _\rho (A,B)&=4\sqbrac{\rmre\rmtr (\rho A_1B_1)-\rmtr (\rho A_1)\rmtr (\rho B_1)}\\
\intertext{and}
\Delta _\rho (A)&=\Delta _\rho (A,A)=4\sqbrac{\rmtr (\rho A_1^2)-\paren{\rmtr (\rho A_1)}^2}
\end{align*}
We also have
\begin{align*}
\sqbrac{\atilde ,\btilde\,}&=\sqbrac{2A_1-I,2B_1-I}=(2A_1-I)(2B_1-I)-(2B_1-I)(2A_1-I)\\
   &=4\sqbrac{A_1,B_1}
\end{align*}
We conclude that $\sqbrac{\atilde ,\btilde\,}=0$ if and only if $\sqbrac{A_1,B_1}=0$ and this does not hold in general so
$\atilde,\btilde$ need not commute. The uncertainty principle becomes
\begin{align}      
\label{eq42}
&\sqbrac{\rmim\rmtr (\rho A_1B_1)}^2+\sqbrac{\rmre\rmtr (\rho A_1B_1)-\rmtr (\rho A_1)\rmtr (\rho A_2)}^2\notag\\
   &\hskip 2pc =\ab{\rmtr (\rho A_1B_1)-\rmtr (\rho A_1)\rmtr (\rho B_1)}^2\notag\\
   &\hskip 2pc \le\sqbrac{\rmtr (\rho A_1^2)-\paren{\rmtr (\rho A_1)}^2}\sqbrac{\rmtr (\rho B_1^2)-\paren{\rmtr (\rho B_1)}^2}\quad\square
\end{align}
\end{exam}
 
 \begin{exam}{4}  
We now consider a special case of Example~3. For $H\in\complex ^2$ we define the Pauli matrices
\begin{equation*}
\sigma _x=\begin{bmatrix}0&1\\1&0\end{bmatrix},\quad
   \sigma _y=\begin{bmatrix}0&i\\-i&0\end{bmatrix},\quad
  \sigma _z=\begin{bmatrix}1&0\\0&-1\end{bmatrix} 
\end{equation*}
Let $\mu\in\sqbrac{0,1}$ and define the dichotomic observable $A=\brac{A_1,I-A_1}$, where
\begin{equation*}
A_1=\tfrac{1}{2}(I+\mu\sigma _x)=\tfrac{1}{2}\begin{bmatrix}1&\mu\\\mu&1\end{bmatrix}
\end{equation*}
and $\Omega _A=\brac{1,-1}$. Similarly, let $B=\brac{B_1,I-B_1}$, where
\begin{equation*}
B_1=\tfrac{1}{2}(I+\mu\sigma _y)=\tfrac{1}{2}\begin{bmatrix}1&i\mu\\-i\mu&1\end{bmatrix}
\end{equation*}
and $\Omega _B=\brac{1,-1}$. We call $A$ and $B$ \textit{noisy spin observables} along the $x$ and $y$ directions, respectively, with
\textit{noise parameter} $1-\mu$ \cite{nc00}.

\parindent 20pt Any state $\rho\in\sscript (H)$ has the form $\rho =\tfrac{I}{2}(I+\rarrow\tbullet\sigmaarrow )$ where $\rarrow\in\real ^3$ with
$\doubleab{\rarrow}\le1$ \cite{bgl95,gud120}. This is called the \textit{Block sphere} representation of $\rho$ \cite{hz12,nc00}.
The eigenvalues of $\rho$ are $\lambda _{\pm} =\tfrac{1}{2}\paren{1\pm\doubleab{\rarrow}}$. Then $\lambda _+=1$, $\lambda _-=0$ if and only if $\doubleab{\rarrow}=1$ and these are precisely the pure states. Letting $\sigma _1=\sigma _x$, $\sigma _2=\sigma _y$,
$\sigma _3=\sigma _z$ we obtain
\begin{align*}
\rho &=\frac{1}{2}\begin{bmatrix}1+r_3&r_1-ir_2\\r_1+ir_2&1-r_3\end{bmatrix}\\
\intertext{and}
\rho A_1&=\frac{1}{4}\begin{bmatrix}1+r_3&r_1-ir_2\\r_1+ir_2&1-r_3\end{bmatrix}\begin{bmatrix}1&\mu\\\mu&1\end{bmatrix}\\
&=\begin{bmatrix}1+r_3+(r_1-ir_2)\mu&(1+r_3)\mu +r_1-ir_2\\(1-r_3)\mu +r _1+ir_2&1-r_3+(r_1+ir_2)\mu\end{bmatrix}
\end{align*}
Hence, $\rmtr (\rho A_1)=\tfrac{1}{2}(1+r_1\mu )$ and as in Example~3, $\elbows{A}_\rho =r_1\mu$. Similarly,
$\rmtr (\rho B_1)=\tfrac{1}{2}(1+r_2\mu )$ and $\elbows{B}_\rho=r_2\mu$. We also obtain
\begin{equation*}
\rmtr (\rho A_1B_1)=\tfrac{1}{4}\sqbrac{1+(r_1+r_2)\mu +ir_2\mu ^2}
\end{equation*}
and it follows from \eqref{eq41} that
\begin{align*}
\rmcor _\rho (A,B)&=4\sqbrac{\rmtr (\rho A_1B_1)-\rmtr (\rho A_1)\rmtr (\rho B_1)}\\
   &=1+(r_1+r_2)\mu +ir_3\mu ^2-(1+r_1\mu )(1+r_2\mu)=-r_1r_2\mu ^2+ir_3\mu ^2
\end{align*}
Therefore, $\Delta _\rho (A,B)=-r_1r_2\mu ^2$. A straightforward calculation shows that
\begin{align*}
\rmtr (\rho A_1^2)&=\tfrac{1}{4}(1+\mu ^2)+\tfrac{1}{2}\mu r_1\\
\rmtr (\rho B_1^2)&=\tfrac{1}{4}(1+\mu ^2)+\tfrac{1}{2}\mu r_2
\end{align*}
It follows that
\begin{equation*}
\Delta _\rho (A)=4\sqbrac{\rmtr (\rho A_1^2)-\paren{\rmtr (\rho A_1)}^2}=\mu ^2(1-r_1^2)
\end{equation*}
and similarly, $\Delta _\rho (B)=\mu ^2(1-r_2^2)$.

The commutator term in \eqref{eq42} becomes
\begin{equation*}
\sqbrac{\rmim\rmtr (\rho A_1B_1)}^2=\tfrac{1}{16}\,r_3^2\mu ^4
\end{equation*}
The covariance term in \eqref{eq42} is
\begin{equation*}
\sqbrac{\rmre (\rho A_1B_1)-\rmtr (\rho A_1)\rmtr (\rho B_1)}^2=\tfrac{1}{16}\,r_1^2r_2^2\mu ^4
\end{equation*}
and the correlation term in \eqref{eq42} is
\begin{equation*}
\ab{\rmtr (\rho A_1B_1)-\rmtr (\rho A_1)\rmtr (\rho B_1)}^2=\tfrac{1}{16}\,(r_3^2+r_1^2r_2^2)\mu ^4
\end{equation*}
Finally, the variance term in \eqref{eq42} is given by
\begin{equation*}
\Delta _\rho (A_1)\Delta _\rho (B_1)=\tfrac{1}{16}\,(1-r_1^2)(1-r_2^2)\mu ^4
\end{equation*}
The inequality in \eqref{eq42} reduces to
\begin{equation}                
\label{eq43}
\tfrac{1}{16}(r_3^2+r_1^2+r_2^2)\mu ^4\le\tfrac{1}{16}(1-r_1^2)(1-r_2^2)\mu ^4
\end{equation}
If $\mu\ne 0$, \eqref{eq43} is equivalent to the inequality
\begin{equation*}
\doubleab{\rarrow}^2=r_1^2+r_2^2+r_3^2\le 1
\end{equation*}
If the commutator term vanishes and $\mu\ne 0$, the uncertainty inequality becomes
\begin{equation}                
\label{eq44}
r_1^2r_2^2\le (1-r_1^2)(1-r_2^2)
\end{equation}
which is equivalent to $r_1^2+r_2^2\le 1$. If $A$ and $B$ are $\rho$-uncorrelated and $\mu\ne 0$, the uncertainty inequality becomes $r_3^2\le (1-r_1^2)(1-r_2^2)$ which is equivalent to $\doubleab{\rarrow}^2\le 1+r_1^2r_2^2$. This inequality and \eqref{eq44} are weaker than \eqref{eq43}.\hfill\qedsymbol
\end{exam}

\section{Real-Valued Coarse Graining}  
Let $A=\brac{A_x\colon x\in\Omega _A}$ be an arbitrary observable. We assume that $A$ is not necessarily real-valued so the outcome space $\Omega _A$ is an arbitrary finite set. For $f\colon\Omega _A\to\real$ with range $\rscript (f)$ we define the real-valued observable $f(A)$ by $\Omega _{f(A)}=\rscript (f)$ and for all $z\in\Omega _{f(A)}$
\begin{equation*}
f(A)_z=A_{f^{-1}(z)}=\sum\brac{A_x\colon f(x)=z}
\end{equation*}
We call $f(A)$ a \textit{real-valued coarse graining} of $A$ \cite{gud120,gud220,hz12}. Then $f(A)$ has stochastic operator
\begin{equation*}
f(A)^\sim=\sum _zzf(A)_z=\sum _zzA_{f^{-1}(z)}=\sum _z\sum _{x\in f^{-1}(z)}zA_x=\sum _xf(x)A_x
\end{equation*}
It follows that $\elbows{f(A)}_\rho =\sum\limits _xf(x)\rmtr (\rho A_x)$ for all $\rho\in\sscript (H)$. If $B$ is another observable and $g\colon\Omega _B\to\real$ we have
\begin{align*}
\rmcor _\rho\sqbrac{f(A),g(B)}&=\sum _{x,y}f(x)g(y)\rmtr (\rho A_xB_y)-\elbows{f(A)}_\rho\elbows{g(B)}_\rho\\
\Delta _\rho\sqbrac{f(A),g(B)}&=\sum _{x,y}f(x)g(y)\rmre\rmtr (\rho A_xB_y)-\elbows{f(A)}_\rho\elbows{g(B)}_\rho\\
\Delta _\rho\sqbrac{f(A)}&=\sum _{x,y}f(x)f(y)\rmtr (\rho A_xA_y)-\elbows{f(A)}_\rho ^2
\end{align*}
Moreover, we have the uncertainty inequality
\begin{equation*}
\ab{\rmcor _\rho\sqbrac{f(A),g(B)}}^2\le\Delta _\rho\sqbrac{f(A)}\Delta _\rho\sqbrac{g(B)}
\end{equation*}
We denote the set of trace-class operators on $H$ by $\tscript (H)$. An \textit{operation} on $H$ is a completely positive, trace reducing, linear map $\oscript\colon\tscript (H)\to\tscript (H)$ \cite{bgl95,gud120,gud220,hz12}. If $\oscript$ preserves the trace, then
$\oscript$ is called a \textit{channel}. A (finite) \textit{instrument} is a finite set of operators
$\iscript =\brac{\iscript _x\colon x\in\Omega _\iscript}$ such that $\iscriptbar =\sum\brac{\iscript _x\colon x\in\Omega _\iscript}$ is a channel \cite{bgl95,gud120,gud220,hz12}. We say that $\iscript$ \textit{measures } an observable $A$ if
$\Omega _\iscript =\Omega _A$ and $\rmtr\sqbrac{\iscript _x(\rho )}=\rmtr (\rho A_x)$ for all $x\in\Omega _\iscript$. It can be shown that $\iscript$ measures a unique observable which we denote by $J(\iscript )$ \cite{gud120,gud220}. Conversely, any observable is measured by many instruments \cite{bgl95,gud120,gud220,hz12}. Corresponding to an operation $\oscript$ we have its
\textit{dual-operation} $\oscript ^*\colon\lscript (H)\to\lscript (H)$ defined by
$\rmtr\sqbrac{\rho\oscript ^*(C)}=\rmtr\sqbrac{\oscript (\rho )C}$ for all $\rho\in\sscript (H)$ \cite{gud120,gud220}. It can be shown that $J(\iscript )_x=\iscript _x^*(I)$ for all $x\in\Omega _\iscript$ where $I$ is the identity operator \cite{gud120,gud220}.

As with observables, if $\iscript$ is an instrument, and $f\colon\Omega _\iscript\to\real$ we define the real-valued instrument $f(\iscript )$ such that $\Omega _{f(\iscript )}=\rscript (f)$ and
\begin{equation*}
f(\iscript )_z=\sum\brac{\iscript _x\colon f(x)=z}
\end{equation*}
If $J(\iscript )=A$, then $J\sqbrac{f(\iscript )}=f(A)$ because
\begin{align*}
\rmtr\sqbrac{f(\iscript )_z(\rho )}&=\rmtr\sqbrac{\sum\brac{\iscript _x(\rho )\colon f(x)=z}}
   =\sum\brac{\rmtr\sqbrac{\iscript _x(\rho )}\colon f(x)=z}\\
   &=\sum\brac{\rmtr (\rho A_x)\colon f(x)=z}=\rmtr\sqbrac{\rho\sum\brac{A_x\colon f(x)=z}}\\
   &=\rmtr\sqbrac{\rho f(A)_z}
\end{align*}
for all $z\in\Omega _{f(A)}=\Omega _{f(\iscript )}$. If $\iscript$ is real-valued, we define $\iscripttilde$ on $\lscript (H)$ by
$\iscripttilde(C)=\sum x\iscript _x(C)$ and $\elbows{\iscript}_\rho =\rmtr\sqbrac{\iscripttilde (\rho )}$. If $J(\iscript )=A$, then
\begin{equation*}
\elbows{\iscript}_\rho=\rmtr\sqbrac{\sum x\iscript _x(\rho )}=\sum x\rmtr\!\sqbrac{\iscript _x(\rho )}=\sum x\rmtr (\rho A_x)
   =\elbows{A}_\rho
\end{equation*}
for all $\rho\in\sscript (H)$. We also define $\Delta _\rho (\iscript )=\Delta _\rho (A)$. It follows that 
$\elbows{f(\iscript )}_\rho =\elbows{f(A)}_\rho$, $\Delta _\rho\sqbrac{f(\iscript )}=\Delta _\rho\sqbrac{f(A)}$ and
$f(\iscript )^\sim=\sum f(x)\iscript _x$.

Let $A=\brac{A_x\colon x\in\Omega _A}$, $B=\brac{B_y\colon y\in\Omega _B}$ be arbitrary observables and suppose $\iscript$ is an instrument with $J(\iscript )=A$. Define the $\iscript$-\textit{product} observable $A\circ B$ with
$\Omega _{A\circ B}=\Omega _A\times\Omega _B$ given by $(A\circ B)_{(x,y)}=\iscript _x(B_y)$ \cite{gud120,gud220}. Then
$A\circ B$ is indeed an observable because
\begin{equation*}
\sum _{x,y}(A\circ B)_{(x,y)}=\sum _{x,y}\iscript _x^*(B_y)=\sum _x\iscript _x^*\paren{\sum _yB_y}
   =\sum _x\iscript _x^*(I)=\sum _xA_x=I
\end{equation*}
Although $A\circ B$ depends on $\iscript$, we shall not indicate this for simplicity. We interpret $A\circ B$ as the observable obtained by first measuring $A$ using $\iscript$ and then measuring $B$. If $f\colon\Omega _A\times\Omega _B\to\real$ we obtain the real-valued observable $f(A,B)=f(A\circ B)$. We then have
\begin{align*}
f(A,B)_z&=(A\circ B)_{f^{-1}(z)}=\sum\brac{(A\circ B)_{(x,y)}\colon f(x,y)=z}\\
   &=\sum\brac{\iscript _x^*(B_y)\colon f(x,y)=z}\\
  f(A,B)^\sim&=\sum _{x,y}f(x,y)(A\circ B)_{(x,y)}=\sum_{x,y}f(x,y)\iscript_x^*(B_y)\\
   \elbows{f(A,B)}_\rho&=\sum _{x,y}f(x,y)\rmtr\sqbrac{\rho (A\circ B)_{(x,y)}}
   =\sum _{x,y}f(x,y)\rmtr\sqbrac{\rho\iscript _x^*(B_y)}\\
   \Delta _\rho\sqbrac{f(A,B)}
   &=\sum _{x,y,x',y'}f(x,y)f(x',y')\rmtr\sqbrac{\rho (A\circ B)_{(x,y)}(A\circ B)_{(x',y')}}-\elbows{f(A,B)}_\rho ^2\\
   &=\rmtr\brac{\rho\sqbrac{\sum _{x,y}f(x,y)\iscript _x^*(B_y)}^2}-\elbows{f(A,B)}_\rho ^2
\end{align*}
If $f$ is a product function $f(x,y)=g(x)h(y)$ we obtain
\begin{equation*}
f(A,B)_z=\sum _z\brac{\iscript _x^*(B_y)\colon g(x)h(y)=z}
\end{equation*}
We then have the simplification
\begin{align*}
f(A,B)^\sim&=\sum _{x,y}g(x)h(y)\iscript _x^*(B_y)=\sum _xg_x\iscript _x^*\paren{\sum _yh(y)B_y}\\
   &=\sum _xg(x)\iscript _x^*\sqbrac{h(B)^\sim}\\
\end{align*}
Hence,
\begin{align*}
\elbows{f(A,B)}_\rho&=\rmtr\sqbrac{\rho f(A,B)^\sim}=\rmtr\brac{\rho\sum _xg(x)\iscript _x^*\sqbrac{h(B)^\sim}}\\
   &=\sum _xg(x)\rmtr\brac{\rho\iscript _x^*\sqbrac{h(B)^\sim}}=\sum _xg(x)\rmtr\brac{\iscript _x(\rho )\sqbrac{h(B)^\sim}}\\
   &=\rmtr\brac{\sum _xg(x)\iscript _x(\rho )\sqbrac{h(B)^\sim}}=\rmtr\brac{g(\iscript )^\sim (\rho )\sqbrac{h(B)^\sim}}
\end{align*}
In a similar way we obtain
\begin{equation*}
\Delta _\rho\sqbrac{f(A,B)}=\rmtr\brac{\paren{g(\iscript )^\sim (\rho )\sqbrac{h(B)^\sim}}^2}-\elbows{f(A,B)}_\rho ^2
\end{equation*}

If $A$ and $B$ are arbitrary observables, we define the observable $B$ \textit{conditioned} by $A$ to be
\begin{equation*}
(B\mid A)_y=\iscript _{\Omega _A}^*(B_y)=\sum _{x\in\Omega _A}\iscript _x^*(B_y)
\end{equation*}
where $\Omega _{B\mid A}=\Omega _B$ \cite{gud120,gud220}. We interpret $(B\mid A)$ as the observable obtained by first measuring $A$ without taking the outcome into account and then measuring $B$. If $B$ is real-valued we have
\begin{align*}
(B\mid A)^\sim&=\sum _yy(B\mid A)_y=\sum _{x,y}y\iscript _x^*(B_y)=\iscript _{\Omega (A)}^*(\btilde )\\
   \elbows{(B\mid A)}_\rho&=\sum _yy\rmtr\sqbrac{\rho\iscript _{\Omega (A)}^*(B_y)}=\sum y\rmtr\sqbrac{\iscriptbar (\rho )B_y}
   =\rmtr\sqbrac{\iscriptbar (\rho )\btilde\,}=\elbows{B}_{\iscriptbar _(\rho )}\\
   \Delta _\rho\sqbrac{(B\mid A)}&=\Delta _\rho\sqbrac{(B\mid A)^\sim}=\Delta _\rho\sqbrac{\iscript _{\Omega (A)}^*(\btilde )}
   =\rmtr\brac{\sqbrac{\iscript _{\Omega (A)}^*(\btilde )}^2}-\sqbrac{\elbows{B}_{\iscriptbar (\rho )}}^2
\end{align*}

We now illustrate the theory of this section with some examples.

\begin{exam}{5}  
The simplest example of an instrument is a \textit{trivial instrument} $\iscript _x(\rho )=\omega (x)\rho$ where $\omega$ is a probability measure on the finite set $\Omega _\iscript$. It is clear that $\iscript$ measures the \textit{trivial observable} $A_x=\omega (x)I$. Let $B$ be an arbitrary observable and let $f\colon\Omega _A\times\Omega _B\to\real$. We then have
\begin{align*}
(A\circ B)_{(x,y)}&=\iscript _x^*(B_y)=\omega (x)B_y\\
f(A,B)_z&=f(A\circ B)_z=\sum\brac{\omega (x)B_y\colon f(x,y)=z}
\end{align*}
We conclude that
\begin{align*}
f(A,B)^\sim&=\sum _{x,y}f(x,y)\omega (x)B_y\\
\elbows{f(A,B)}_\rho&=\sum _{x,y}f(x,y)\omega (x)\rmtr (\rho B_y)\\
\Delta _\rho\sqbrac{f(A,B))}&=\rmtr\brac{\rho\sqbrac{\sum _{x,y}f(x,y)\omega (x)B_y}^2}-\elbows{f(A,B)}_\rho ^2
\end{align*}
Moreover, since
\begin{equation*}
(B\mid A)_y=\sum _x\iscript _x^*(B_y)=\sum _x\omega (x)(B_y)=B_y
\end{equation*}
we have that $(B\mid A)=B$.\hfill\qedsymbol
\end{exam}

\begin{exam}{6}  
Let $A=\brac{A_x\colon x\in\Omega _A}$ and $B=\brac{B_y\colon y\in\Omega _B}$ be arbitrary observables and let
$\hscript _x(\rho )=\rmtr (\rho A_x)\alpha _x$, $\alpha _x\in\sscript (H)$ be a \textit{Holevo instrument} \cite{gud120,gud220}. Then
$\hscript$ measure $A$ because
\begin{equation*}
\rmtr\sqbrac{\hscript _x(\rho )}=\rmtr\sqbrac{\rmtr (\rho A_x)\alpha _x}=\rmtr (\rho A_x)
\end{equation*}
Since $\hscript _x^*(a)=\rmtr (\alpha _xa)A_x$ for all $x\in\Omega _A$ \cite{gud120,gud220}, we have
\begin{equation*}
(A\circ B)_{(x,y)}=\hscript _x^*(B_y)=\rmtr (\alpha _xB_y)A_x
\end{equation*}
If $f\colon\Omega _A\times\Omega _B\to\real$, we obtain the real-valued observable
\begin{equation*}
f(A,B)_z=\sum\brac{\rmtr (\alpha _xB_y)A_x\colon f(x,y)=z}
\end{equation*}
We conclude that 
\begin{align*}
f(A,B)_z&=\sum _{x,y}f(x,y)\hscript _x^*(B_y)=\sum _{x,y}f(x,y)\rmtr (\alpha _xB_y)A_x\\
   \elbows{f(A,B)}_\rho&=\sum _{x,y}f(x,y)\rmtr (\alpha _xB_y)\rmtr (\rho A_x)\\
   \Delta _\rho\sqbrac{f(A,B)}&=\sum _{x,y,x',y'}f(x,y)f(x',y')\rmtr\sqbrac{\rho\rmtr (\alpha _xB_y)A_x\rmtr (\alpha _{x'}B_{y'})A_{x'}}\\
    &\quad -\elbows{f(A,B)}_\rho ^2\\
    &=\rmtr\brac{\rho\sqbrac{\sum _{x,y}f(x,y)\rmtr (\alpha _xB_y)A_x}^2}-\elbows{f(A,B)}_\rho ^2
\end{align*}
Moreover, we have
\begin{equation*}
(B\mid A)_y=\sum _x\hscript _x^*(B_y)=\sum _x\rmtr (\alpha _xB_y)A_x\hskip 10pc\square
\end{equation*}
\end{exam}

\begin{exam}{7}  
Let $A,B$ be arbitrary observables and let $\lscript$ be the \textit{L\"uders instrument} given by
$\lscript _x(\rho )=A_x^{1/2}\rho A_x^{1/2}$ \cite{gud120,gud220,lud51}. Then
\begin{equation*}
\rmtr\sqbrac{\lscript _x(\rho )}=\rmtr (A_x^{1/2}\rho A_x^{1/2})=\rmtr (\rho A_x)
\end{equation*}
so $\lscript$ measures $A$. Since $\lscript _x^*(a)=A_x^{1/2}aA_x^{1/2}$ \cite{gud120,gud220} we have
\begin{equation*}
(A\circ B)_{(x,y)}=A_x^{1/2}B_yA_x^{1/2}
\end{equation*}
If $f\colon\Omega _A\times\Omega _B\to\real$, we obtain the real-valued observable
\begin{equation*}
f(A,B)_z=\sum\brac{A_x^{1/2}B_yA_x^{1/2}\colon f(x,y)=z}
\end{equation*}
We conclude that
\begin{align*}
f(A,B)^\sim&=\sum _{x,y}f(x,y)A_x^{1/2}B_yA_x^{1/2}\\
   \elbows{f(A,B)}_\rho&=\sum _{x,y}f(x,y)\rmtr (\rho A_x^{1/2}B_yA_x^{1/2})=\sum _{x,y}f(x,y)\rmtr (A_x^{1/2}\rho A_x^{1/2}B_y)\\
   \Delta _\rho\sqbrac{f(A,B)}&=\rmtr\brac{\rho\sqbrac{\sum _{x,y}f(x,y)A_x^{1/2}B_yA_x^{1/2}}^2}-\elbows{f(A,B)}_\rho ^2
\end{align*}
Moreover, we have
\begin{equation*}
(B\mid A)_y=\sum _x\lscript _x^*(B_y)=\sum _xA_x^{1/2}B_yA_x^{1/2}\hskip 10pc\square
\end{equation*}
\end{exam}

\end{document}